\newtheorem{theorem}{Theorem}
\newtheorem{definition}[theorem]{Definition}
\newtheorem{lemma}[theorem]{Lemma}
\newtheorem{remark}[theorem]{Remark}
\newcommand\blfootnote[1]{%
	\begingroup
	\renewcommand\thefootnote{}\footnote{#1}%
	\addtocounter{footnote}{-1}%
	\endgroup
}
\begin{document}

\pagestyle{fancy}

\fancyhf{} 
\fancyhead[L]{Modeling Blood Alcohol Concentration}
\fancyfoot[C]{\thepage}

\title{Modeling Blood Alcohol Concentration Using Fractional 
Differential Equations Based on the $\psi$-Caputo Derivative\blfootnote{This is 
a preprint of a paper whose final and definite form is published Open Access 
in \emph{Math. Meth. Appl. Sci.} at [http://doi.org/10.1002/mma.10002].}} 

\author{Om Kalthoum Wanassi$^{1,2,}$\thanks{Corresponding author: {\tt om.wanassi@ua.pt}}\\
{\tt om.wanassi@ua.pt}
\and Delfim F. M. Torres$^{1,3}$\\
{\tt delfim@ua.pt}}

\date{$^1$\text{Center for Research and Development in Mathematics and Applications (CIDMA),}\\
Department of Mathematics, University of Aveiro, 3810-193 Aveiro, Portugal\\[0.3cm]
$^2$LR18ES17, Faculty of Sciences of Monastir, University of Monastir,\\ 
5019 Monastir, Tunisia\\[0.3cm]
$^3$Research Center in Exact Sciences (CICE), Faculty of Sciences and Technology (FCT),
University of Cape Verde (Uni-CV), Praia 7943-010, Cape Verde}

\maketitle


\begin{abstract}
We propose a novel dynamical model for blood alcohol concentration 
that incorporates $\psi$-Caputo fractional derivatives. Using the 
generalized Laplace transform technique, we successfully derive
an analytic solution for both the alcohol concentration in the stomach 
and the alcohol concentration in the blood of an individual. 
These analytical formulas provide us a straightforward numerical scheme, 
which demonstrates the efficacy of the $\psi$-Caputo derivative operator 
in achieving a better fit to real experimental data on blood alcohol 
levels available in the literature. In comparison to existing classical 
and fractional models found in the literature, our model outperforms 
them significantly. Indeed, by employing a simple yet non-standard kernel 
function $\psi(t)$, we are able to reduce the error by more than half, 
resulting in an impressive gain improvement of 59 percent.

\medskip

\noindent {\bf Keywords:} fractional calculus; 
$\psi$-Caputo fractional differential equations; 
generalized Caputo fractional derivatives; 
mathematical modeling; 
blood alcohol dynamical model; 
analytic solutions;
better fit to real experimental data with a gain improvement of 59\%.

\medskip

\noindent \textbf{MSC 2020:} 26A33; 34A08; 65L10.
\end{abstract}

\maketitle


\section{Introduction}
\label{sec:01}

Alcohol, a toxic and psychoactive substance known for its addictive properties, 
has become deeply integrated into many societies. Alcoholic beverages have become 
a commonplace element of social interactions for a significant portion of the population. 
This is especially evident in social environments that carry considerable visibility 
and societal influence, where alcohol often accompanies social gatherings. 
Regrettably, the detrimental health and social consequences caused or exacerbated 
by alcohol are frequently overlooked or downplayed. In reality, alcohol consumption 
is responsible for a staggering three million deaths worldwide each year, while millions 
more suffer from disabilities and poor health as a result. The harmful use of alcohol 
accounts for approximately 7.1\% of the global burden of disease among males and 2.2\% 
among females. Shockingly, alcohol stands as the primary risk factor for premature 
mortality and disability among individuals aged 15 to 49, comprising 10\% of all deaths 
within this age group. Moreover, disadvantaged populations, particularly those who 
are vulnerable, experience disproportionately higher rates of alcohol-related 
deaths and hospitalizations \cite{Global,Health}.

Over the past several decades, fractional calculus has captured the attention 
of researchers across diverse fields of science and engineering \cite{Ref3.02}. 
Fractional differential equations, in particular, have emerged as a common tool 
in various scientific and engineering disciplines \cite{MR3917292,Ref3.04}. 
These equations find application in fields such as signal processing, 
control theory, diffusion, thermodynamics, biophysics, blood flow phenomena, 
rheology, electrodynamics, electrochemistry, electromagnetism, continuum mechanics, 
statistical mechanics, and dynamical systems 
\cite{Agrawal_et_al,Baleanu_2007,Benghorbal,Wanassi-Delfim,Singh}.

The ability to model blood alcohol content over time is not only of interest 
to medical professionals but also holds significant value in comparing metabolic 
capabilities. Mathematical techniques provide a means to not only model blood 
concentration but also to analyze the metabolic processes of various endogenous compounds, 
such as blood glucose levels or administered medications.
Recent research \cite{Ricardo_et_al, Qureshi_et_al} has focused on investigating models 
for blood alcohol concentration. A comparative analysis involving three types of fractional 
derivatives -- Caputo, Atangana-Baleanu-Caputo (ABC), and Caputo-Fabrizio (CF) derivatives --
demonstrates that the Caputo and ABC operators are better suited for numerical simulations 
using real data when compared to classical models employing standard integer-order derivatives 
\cite{Qureshi_et_al}. Here, we improve the best results of \cite{Ricardo_et_al,Qureshi_et_al}
by using $\psi$-Caputo fractional derivatives, which allows us to reduce
the total square error by more than half, resulting in an impressive gain of 59 percent.
The selection of this generalized-Caputo operator in our work 
is grounded in several merits that make it a suitable choice for our study
due to its versatility, suitability for modeling fractional order systems, 
and its ability to capture complex dynamics. Indeed, the generalized $\psi$-Caputo 
operator is a versatile fractional derivative 
that provides a unified framework for handling a wide range of real-world problems. 
It has been successfully applied in various scientific disciplines, including physics, 
engineering, and mathematical modeling \cite{MR4492865}.
Its versatility allows us to tackle complex phenomena and systems in a unified manner,
being particularly well-suited for modeling systems with memory effects and non-local behavior. 
As we shall see, it allows us to capture the fractional order dynamics of blood alcohol accurately.
By using the generalized-Caputo operator, we enhance the fidelity of our model, 
enabling us to better capture the long-range dependencies and memory effects 
that play a crucial role in modeling blood alcohol concentration. 
This improved modeling precision lead us to more accurate predictions 
and a better understanding of the underlying processes. The choice of the 
generalized $\psi$-Caputo operator represents one of the novel aspects of our manuscript 
compared to previous articles in the field. By using this operator, we contribute 
to the expanding body of knowledge on fractional calculus and its applications: 
our choice allows us to offer a fresh perspective and advance considerably 
the state of the art.

The paper is organized as follows. In Section~\ref{sec:02}, 
we recall the notions and results from $\psi$-fractional
calculus needed in the sequel. Our contributions
are then given in Section~\ref{sec:03}: we introduce the new $\psi$-Caputo 
fractional model and obtain an explicit formula for the exact solution of the problem
(Section~\ref{sec:03:1}); we show how available models and results from the literature
can be obtained as particular cases (Sections~\ref{sec:3.2.1} and \ref{sec:3.2.2});
and we show the accuracy and efficiency of our new model 
by significantly improving the available results 
in the literature (Section~\ref{sec:3.2.3}).
We end with Section~\ref{sec:05} of conclusion.


\section{Preliminaries}
\label{sec:02}

Originally, the $\psi$-Caputo fractional calculus was introduced by Osler in 1970 
\cite{MR0260942}, being now part of the classical fractional calculus: see  \cite[Section~18.2]{MR1347689} 
and \cite[Section~2.5]{MR2218073}. Recently, Almeida made a small variation 
on the Riemann--Liouville operators to get the Caputo versions, 
and popularized the $\psi$-terminology \cite{Almeida1}.
Here we recall necessary notions from this calculus
and two lemmas that will be useful in the proof
of out theoretical result.

\begin{definition}[The $\psi$-Riemann--Liouville fractional integral \cite{MR1347689,MR2218073}]
\label{def:01}
Let $\alpha >0$, $f: [a,b] \longrightarrow \mathbb{R}$ be 
integrable and $\psi \in \mathcal{C}^{1}([a,b])$ an increasing function 
such that $\psi'(t)\neq 0$, for all $t \in [a,b]$. The $\psi$-Riemann--Liouville 
fractional integral of $f$ of order $\alpha$ is defined as follows:
\begin{equation}
\label{IF-RL}
I_{a^+}^{\alpha,\psi} f(t)=\frac{1}{\Gamma(\alpha)}
\int_{a}^{t} \psi'(s)(\psi(t)-\psi(s))^{\alpha-1}f(s) ds, 
\quad \alpha>0,
\end{equation}
where $\Gamma(\alpha)$ is the Gamma function. 
\end{definition}

Note that for $\psi(t)=t$ and $\psi(t)=\ln(t)$, equation (\ref{IF-RL}) 
is reduced to  the Riemann--Liouville and Hadamard fractional integrals,
respectively.

\begin{definition}[The $\psi$-Riemann--Liouville fractional derivative \cite{MR1347689,MR2218073}]
\label{def:02}	
Let $n \in \mathbb{N}$ and let $\psi,f \in \mathcal{C}^{n}([a,b],\mathbb{R})$ 
be two functions such that $\psi$ is increasing and $\psi'(t)\neq 0$ 
for all $t \in [a,b]$. The $\psi$-Riemann--Liouville fractional derivative 
of $f$ of order $\alpha$ is given by
\begin{equation}
\label{FD-RL}
\begin{split}
^{RL}D^{\alpha,\psi}_{a^+} f(t)
&=  \left(\frac{1}{\psi'(t)}  
\frac{d}{dt} \right)^n I^{n-\alpha,\psi}_{a^+} f(t) \\
&= \frac{1}{\Gamma(n-\alpha)}\left(\frac{1}{\psi'(t)}  
\frac{d}{dt} \right)^n\int_{a}^{t}\psi'(s)(\psi(t)-\psi(s))^{n-\alpha-1}f(s) ds,
\end{split}
\end{equation}
where $n=[\alpha]+1$.  
\end{definition}

\begin{definition}[The $\psi$-Caputo fractional derivative \cite{Almeida1_et_al}]
\label{def:03}	
Let $n \in \mathbb{N}$ and let $\psi,f \in \mathcal{C}^{n}([a,b],\mathbb{R})$ 
be two functions such that $\psi$ is increasing and $\psi'(t)\neq 0$ for all $t \in [a,b]$. 
The $\psi$-Caputo fractional derivative of $f$ of order $\alpha$ is given by
\begin{equation}
\label{FD-C}
\begin{split}
^{C}D^{\alpha,\psi}_{a^+} f(t)
&=  I^{n-\alpha,\psi}_{a^+}  \left(\frac{1}{\psi'(t)}  
\frac{d}{dt} \right)^n f(t) \\
&=  \frac{1}{\Gamma(n-\alpha)}
\int_{a}^{t}\psi'(s)(\psi(t)-\psi(s))^{n-\alpha-1}f_{\psi}^{[n]}(s) ds,
\end{split}
\end{equation}
where $n=[\alpha]+1$ and $f_{\psi}^{[n]}(t):=\left(\frac{1}{\psi'(t)}  
\frac{d}{dt} \right)^n f(t)$.  
\end{definition}

\begin{definition}[See \cite{MR3244285}]
The Mittag--Leffler function for one and two-parameter is defined, respectively, as
\begin{equation}
\label{M-L}
E_\alpha (z)=\sum_{k=0}^{\infty} \frac{z^k}{\Gamma(\alpha k+1)}, 
\quad \alpha \in \mathbb{C}, \quad Re(\alpha)>0,
\end{equation}
\begin{equation}
\label{M-L1}
E_{\alpha,\beta} (z)=\sum_{k=0}^{\infty} \frac{z^k}{\Gamma(\alpha k+\beta)},
\quad \alpha, \beta \in \mathbb{C}, \quad Re(\alpha)>0. 
\end{equation}
\end{definition}

We make use of generalized Laplace transforms.

\begin{definition}[See \cite{MR4577643}]
Let $f : [0,\infty) \rightarrow \mathbb{R}$ be a real-valued function and
$\psi$ be a nonnegative increasing function such that $\psi(0)=0$. 
Then the Laplace transform of $f$ with respect to $\psi$ is defined by
$$
\mathcal{L}_\psi\{f(t)\}
=F(s)
=\int_{0}^{\infty} e^{-s \psi(t)}\psi'(t) f(t) dt
$$
for all $s \in \mathbb{C}$ such that this integral converges. 
Here $\mathcal{L}_\psi$ denotes the Laplace transform with respect to $\psi$, 
which is called a generalized Laplace transform.
\end{definition}

\begin{definition}[See \cite{MR4577643}]
An $n$-dimensional function $f : [0,\infty) \rightarrow \mathbb{R}$
is said to be of $\psi$-exponential order $c>0$
if there exist positive constants $M$ and $T$ 
such that for all $t>T$, 
$$
\left\|f\right\|_\infty = \max_{1 \leq i \leq n} \left\|f_i\right\|_\infty
\leq M e^{c \psi(t)},
$$
that is, if $f(t) = O\left(e^{c \psi(t)}\right)$ as $t \rightarrow \infty$.
\end{definition}

\begin{lemma}[See \cite{Jarad_et_al}]
\label{lemma:01}
Let $\alpha >0$, $f \in AC_{\psi}^{n}[a,b]$ for any $b>a$, 
and $f^{[k]}$, $k=0,1,\ldots n$, be of $\psi(t)$-exponential order. Then,
\begin{equation}
\label{FLT}
\mathcal{L}_\psi  \{(^{C}D^{\alpha,\psi}_{a^+}f)(t)\}(s)
=s^\alpha  \left[  \mathcal{L}_\psi  \{f(t)\}- \sum_{k=0}^{n-1} s^{-k-1}(f^{[k]})(a^+)\right].
\end{equation}
\end{lemma}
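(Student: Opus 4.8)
The plan is to reduce the claim to two building blocks and then glue them with the defining identity $^{C}D^{\alpha,\psi}_{a^+} f = I^{n-\alpha,\psi}_{a^+} f^{[n]}_{\psi}$ from Definition~\ref{def:03} (here and below $f^{[k]}$ denotes the iterate $f^{[k]}_{\psi}=\left(\frac{1}{\psi'}\frac{d}{dt}\right)^{k} f$). The two blocks are the action of $\mathcal{L}_\psi$ on the $\psi$-Riemann--Liouville integral and on the iterated operator $f^{[n]}_{\psi}$.

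First I would establish the integral rule
\[
\mathcal{L}_\psi\{I^{\alpha,\psi}_{a^+} f\}(s) = s^{-\alpha}\,\mathcal{L}_\psi\{f\}(s).
\]
Recognizing $I^{\alpha,\psi}_{a^+} f$ as the $\psi$-convolution of $f$ with the kernel $g_\alpha(t)=(\psi(t)-\psi(a))^{\alpha-1}/\Gamma(\alpha)$, the change of variable $u=\psi(t)$ in the defining integral of $\mathcal{L}_\psi$ (using $\psi(a)=0$) turns the generalized transform into an ordinary Laplace transform in $u$. Under this substitution $\mathcal{L}_\psi\{g_\alpha\}(s)=s^{-\alpha}$, exactly as in the classical case, and the generalized convolution theorem for $\mathcal{L}_\psi$ then gives the rule above.

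Second I would compute $\mathcal{L}_\psi\{f^{[n]}_{\psi}\}$ by induction on $n$. For $n=1$ the weight $\psi'(t)$ in the definition of $\mathcal{L}_\psi$ cancels the factor $1/\psi'(t)$ in $f^{[1]}_{\psi}$, reducing the transform to $\int_a^\infty e^{-s\psi(t)} f'(t)\,dt$; integration by parts, together with the vanishing of the boundary term at infinity guaranteed by the $\psi$-exponential order hypothesis and $\psi(a)=0$, yields $\mathcal{L}_\psi\{f^{[1]}_{\psi}\}(s)=s\,\mathcal{L}_\psi\{f\}(s)-f(a^+)$. Applying this one-step rule to $f^{[k]}_{\psi}$ and inserting the inductive hypothesis produces
\[
\mathcal{L}_\psi\{f^{[n]}_{\psi}\}(s)=s^{n}\,\mathcal{L}_\psi\{f\}(s)-\sum_{k=0}^{n-1}s^{\,n-1-k}\,f^{[k]}(a^+).
\]

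Finally, combining the two blocks with $\alpha$ replaced by $n-\alpha$ in the integral rule,
\[
\mathcal{L}_\psi\{{}^{C}D^{\alpha,\psi}_{a^+} f\}(s)=s^{-(n-\alpha)}\,\mathcal{L}_\psi\{f^{[n]}_{\psi}\}(s)=s^{\alpha}\,\mathcal{L}_\psi\{f\}(s)-\sum_{k=0}^{n-1}s^{\,\alpha-k-1}\,f^{[k]}(a^+),
\]
and factoring out $s^\alpha$ recovers \eqref{FLT}. The main obstacle I anticipate is analytic rather than algebraic: rigorously justifying the convolution theorem for $\mathcal{L}_\psi$ and the vanishing of the boundary contributions at infinity. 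Both rest on the $\psi$-exponential order assumption on $f$ and its iterates $f^{[k]}$, which secures absolute convergence of all integrals for $\mathrm{Re}(s)$ large enough and legitimizes interchanging the order of integration in the convolution, while the $AC^{n}_{\psi}[a,b]$ hypothesis is what makes $f^{[n]}_{\psi}$ well defined and integrable so that the integration by parts is valid.
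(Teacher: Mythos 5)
The paper states this lemma without proof, importing it directly from the cited reference of Jarad and Abdeljawad, so there is no in-paper argument to compare against; your reconstruction is correct and follows essentially the same route as that reference. The two building blocks you identify --- the rule $\mathcal{L}_\psi\{I^{\alpha,\psi}_{a^+}f\}=s^{-\alpha}\mathcal{L}_\psi\{f\}$ obtained from the $\psi$-convolution theorem, and the inductive integration-by-parts formula for $\mathcal{L}_\psi\{f^{[n]}_{\psi}\}$ --- combine through Definition~\ref{def:03} exactly as you indicate (with the implicit normalization $\psi(a)=0$ that the generalized transform requires), and the algebra yielding \eqref{FLT} checks out.
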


\begin{lemma}[See \cite{Jarad_et_al}]
\label{lemma:02}	
Let $Re(\alpha)>0$ and $\vert \frac{\lambda}{s^\alpha}\vert <1$. Then,
\begin{equation} 
\label{GL-M1}
\mathcal{L}_\psi  \{E_\alpha \left( \lambda (\psi(t)-\psi(a))^\alpha \right) \} 
=\frac{s^{\alpha-1}}{s^\alpha-\lambda},
\end{equation}
and 
\begin{equation}
\label{GL-M2}
\mathcal{L}_\psi  \{(\psi(t)-\psi(a))^{\beta-1} 
E_{\alpha,\beta} \left( \lambda (\psi(t)-\psi(a))^\alpha \right) \} 
=\frac{s^{\alpha-\beta}}{s^\alpha-\lambda}.
\end{equation}
\end{lemma}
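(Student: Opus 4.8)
The plan is to prove both identities by expanding the Mittag--Leffler functions as power series and transforming term by term, after first reducing the generalized Laplace transform of a power of $\psi$ to the classical Gamma integral. Since the transform in the preceding definition is taken on $[0,\infty)$ with $\psi(0)=0$, I work with $a=0$, so that $\psi(a)=\psi(0)=0$ and the arguments $(\psi(t)-\psi(a))$ become simply $\psi(t)$.

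First I would establish the single auxiliary computation on which everything rests: for $\mathrm{Re}(\gamma)>-1$,
\begin{equation*}
\mathcal{L}_\psi\{(\psi(t))^{\gamma}\}
=\int_0^\infty e^{-s\psi(t)}\psi'(t)(\psi(t))^{\gamma}\,dt
=\int_0^\infty e^{-su}u^{\gamma}\,du
=\frac{\Gamma(\gamma+1)}{s^{\gamma+1}},
\end{equation*}
where the middle equality is the substitution $u=\psi(t)$, $du=\psi'(t)\,dt$, valid because $\psi$ is increasing with $\psi(0)=0$ and $\psi(t)\to\infty$, and the last equality is the standard value of the Gamma integral for $\mathrm{Re}(s)>0$. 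This one formula converts the generalized transform into the ordinary one.

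Next I would insert the series \eqref{M-L} and \eqref{M-L1}. For the first identity, writing $E_\alpha(\lambda(\psi(t))^\alpha)=\sum_{k\ge 0}\lambda^k(\psi(t))^{\alpha k}/\Gamma(\alpha k+1)$ and transforming each term with $\gamma=\alpha k$ gives
\begin{equation*}
\mathcal{L}_\psi\{E_\alpha(\lambda(\psi(t))^\alpha)\}
=\sum_{k=0}^{\infty}\frac{\lambda^k}{\Gamma(\alpha k+1)}\cdot\frac{\Gamma(\alpha k+1)}{s^{\alpha k+1}}
=\frac{1}{s}\sum_{k=0}^{\infty}\left(\frac{\lambda}{s^\alpha}\right)^{k}.
\end{equation*}
The $\Gamma$-factors cancel, and the geometric series converges precisely under the hypothesis $|\lambda/s^\alpha|<1$, summing to $(1-\lambda/s^\alpha)^{-1}$; multiplying by $1/s$ yields $s^{\alpha-1}/(s^\alpha-\lambda)$, which is \eqref{GL-M1}. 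The second identity is identical in structure: inserting the two-parameter series and transforming the generic term $(\psi(t))^{\alpha k+\beta-1}$ with $\gamma=\alpha k+\beta-1$, the factor $\Gamma(\alpha k+\beta)$ again cancels, leaving $s^{-\beta}\sum_{k\ge0}(\lambda/s^\alpha)^k=s^{\alpha-\beta}/(s^\alpha-\lambda)$, which is \eqref{GL-M2}.

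The one genuinely delicate point is the interchange of summation and integration used when transforming the series term by term, and I expect this to be the main obstacle. I would justify it by dominated convergence: after the substitution $u=\psi(t)$, the partial sums of the transformed integrand are bounded in absolute value by $e^{-\mathrm{Re}(s)u}E_\alpha(|\lambda|u^\alpha)$, and by the classical exponential asymptotics of the Mittag--Leffler function this majorant is integrable on $[0,\infty)$ throughout the region $|\lambda/s^\alpha|<1$. Hence Fubini applies and the termwise transform is legitimate. Once this interchange is secured, the remainder is only the bookkeeping of Gamma cancellations and a geometric sum, exactly as above.
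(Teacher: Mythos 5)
The paper does not actually prove this lemma---it is quoted from Jarad and Abdeljawad \cite{Jarad_et_al}---so there is no internal proof to compare against; your argument (substituting $u=\psi(t)$ to reduce $\mathcal{L}_\psi$ of powers of $\psi$ to the classical Gamma integral, then transforming the Mittag--Leffler series termwise and summing the resulting geometric series, with the $\Gamma$-factors cancelling) is exactly the standard derivation used in that reference, and it is correct. The one imprecision is in your justification of the interchange: the majorant $e^{-\mathrm{Re}(s)u}E_\alpha(|\lambda|u^\alpha)$ behaves like $e^{(|\lambda|^{1/\alpha}-\mathrm{Re}(s))u}$ for large $u$, so it is integrable when $\mathrm{Re}(s)>|\lambda|^{1/\alpha}$, a condition that is \emph{not} implied by $|\lambda/s^\alpha|<1$ for general complex $s$; the routine repair is to establish the identity on that smaller half-plane and extend to the stated region by analytic continuation of both sides (for real positive $s$, the case actually used in this paper, the two conditions coincide and your argument is already complete).
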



\section{Main results}
\label{sec:03}

We propose a new blood alcohol model associated with 
the $\psi$-Caputo fractional operator (Section~\ref{sec:03:1}),
we obtain its analytical solution (Theorem~\ref{thm:new}),
and show the advantages of our model with respect
to the ones available in the literature (Section~\ref{sec:04}). 


\subsection{Blood alcohol model and its analytical solution}
\label{sec:03:1}

Alcoholic beverages have been an integral part of various cultures for thousands of years. 
However, it is important to recognize that alcohol consumption not only leads to disorders 
but also significantly impacts the incidence of chronic diseases, injuries, and health issues. 
Understanding the effects of alcohol consumption on health requires considering three key factors: 
the quality of the alcohol consumed, the volume of alcohol consumed, and the consumption pattern.
Examining these categories is crucial as they contribute to both detrimental and beneficial consequences. 
For instance, epidemiological studies and research conducted on animals have suggested that excessive 
alcohol consumption can depress cardiac function and cause cardiomyopathy or cardiomyopathy-related injuries. 
The negative effects of heavy alcohol consumption extend however beyond cardiac function, leading to symptoms 
such as thirst, fatigue, drowsiness, weakness, nausea, dry mouth, headaches, and difficulties with concentration
\cite{Global,Keefe_et_al,Larsson_et_al,Friedman,Penning_et_al}.

By comprehending the impact of alcohol consumption in terms of quality, volume, and consumption pattern, 
we can better understand its effects on health and make informed decisions regarding alcohol consumption.
Here we propose the following model:
\begin{equation}
\label{PL}
\begin{cases}
^{C}D^{\alpha,\psi} A(t) =-k_1A(t), \quad A(0)=A_0,\\ 
^{C}D^{\beta,\psi} B(t) =k_1A(t)-k_2B(t), \quad B(0)=B_0,
\end{cases} 
\end{equation}
where $A(t)$ and $B(t)$ represent the absorptions of alcohol in stomach and alcohol 
in the blood at time $t$, respectively, $k_1$ and $k_2$
are nonzero constants, $A_0$ is the initial absorption of alcohol in the stomach,
and $B_0$ the initial quantity of alchool in the blood. We obtain the
solution for the concentration of alcohol in stomach, $A(t)$, and the concentration 
of alcohol in the blood, $B(t)$, by employing the generalized Laplace transform technique.

\begin{remark} 
\label{rm9}
Given the so called conjugation relations, see equation (18.26) of \cite{MR1347689}
or equations (2.5.7) to (2.5.10) of \cite{MR2218073}, one can easily reduce $\psi$-Caputo fractional 
differential equations to classical Caputo fractional differential equations \cite{MR4379926}.
\end{remark}

\begin{remark} 
\label{rm10}
The main differences between the $\psi$-Riemann-Liouville 
and $\psi$-Caputo fractional derivatives lie in their definitions 
(cf. Definitions~\ref{def:02} and \ref{def:03}), memory effects, 
and locality properties. The choice between these two operators 
depends on the specific mathematical requirements and physical 
interpretations of the problem at hand. In our case, they allow 
us to consider the initial conditions $A(0)=A_0$ and $B(0)=B_0$ 
in \eqref{PL}, which is in agreement with the previous models 
considered in the literature: see \cite{Ricardo_et_al,Qureshi_et_al}
and references therein.
\end{remark}

\begin{theorem}
\label{thm:new}
The solution to the system of fractional differential equations 
\eqref{PL} is given by
\begin{equation}
\label{Eq1}
A(t)=A_0  E_{\alpha} \left( -k_1 (\psi(t)-\psi(0))^\alpha \right)  
\end{equation}
and 
\begin{equation}
\label{Eq2}
\begin{split}
B(t) &=  B_0  E_{\beta} \left( -k_2 (\psi(t)-\psi(0))^\beta \right) 
+ k_1 A_0 \sum_{n=0}^{\infty}\sum_{m=0}^{\infty}
\frac{(-k_2)^n(-k_1)^m}{\Gamma((n+1)\beta)\Gamma(\alpha m+1)} \\
& \quad \times \int_{0}^{t} (\psi(t)-\psi(\tau))^{\beta+n \beta-1} 
(\psi(\tau)-\psi(0))^{\alpha m} \psi'(\tau) d\tau.   
\end{split}
\end{equation}
\end{theorem}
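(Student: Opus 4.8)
The plan is to transform the whole system \eqref{PL} with the generalized Laplace transform $\mathcal{L}_\psi$, reduce it to a pair of algebraic relations in the variable $s$, and then invert. Working in the fractional regime $0<\alpha,\beta<1$, so that $n=1$ in Lemma~\ref{lemma:01}, I would first transform the $A$-equation. By \eqref{FLT} this gives $s^\alpha\mathcal{L}_\psi\{A\}-s^{\alpha-1}A_0=-k_1\mathcal{L}_\psi\{A\}$, hence
$$\mathcal{L}_\psi\{A(t)\}=\frac{A_0\,s^{\alpha-1}}{s^\alpha+k_1}=A_0\,\frac{s^{\alpha-1}}{s^\alpha-(-k_1)}.$$
Comparing with \eqref{GL-M1} (with $\lambda=-k_1$ and $a=0$) and inverting yields \eqref{Eq1} immediately; this part is routine.

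For \eqref{Eq2} I would transform the $B$-equation in the same manner, obtaining $s^\beta\mathcal{L}_\psi\{B\}-s^{\beta-1}B_0=k_1\mathcal{L}_\psi\{A\}-k_2\mathcal{L}_\psi\{B\}$, and solve for the transform of $B$:
$$\mathcal{L}_\psi\{B(t)\}=\frac{B_0\,s^{\beta-1}}{s^\beta+k_2}+\frac{k_1}{s^\beta+k_2}\,\mathcal{L}_\psi\{A(t)\}.$$
The first summand inverts directly through \eqref{GL-M1} (with $\lambda=-k_2$) to the leading term $B_0E_\beta\!\left(-k_2(\psi(t)-\psi(0))^\beta\right)$ of \eqref{Eq2}, so the whole difficulty is concentrated in the second summand.

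To handle the second summand I would expand the factor $1/(s^\beta+k_2)$ as a geometric series, legitimate for $|k_2/s^\beta|<1$ exactly as in Lemma~\ref{lemma:02}, namely $\frac{1}{s^\beta+k_2}=\sum_{n=0}^\infty(-k_2)^n s^{-\beta(n+1)}$. From \eqref{GL-M2} with $\lambda=0$ one reads off the power rule $\mathcal{L}_\psi\{(\psi(t)-\psi(0))^{\gamma-1}\}=\Gamma(\gamma)\,s^{-\gamma}$, so each term $s^{-\beta(n+1)}$ inverts to $(\psi(t)-\psi(0))^{\beta(n+1)-1}/\Gamma(\beta(n+1))$; thus $1/(s^\beta+k_2)$ is the transform of $g(t):=\sum_{n\ge0}(-k_2)^n(\psi(t)-\psi(0))^{\beta(n+1)-1}/\Gamma(\beta(n+1))$. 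Expanding the already-known $A(t)$ from \eqref{Eq1} by \eqref{M-L} as $A(t)=A_0\sum_{m\ge0}(-k_1)^m(\psi(t)-\psi(0))^{\alpha m}/\Gamma(\alpha m+1)$, I would then invert the product $\frac{k_1}{s^\beta+k_2}\mathcal{L}_\psi\{A\}$ by the convolution theorem for $\mathcal{L}_\psi$, which turns a product of transforms into the $\psi$-weighted convolution $k_1\int_0^t g\big(\cdot\big)A(\tau)\,\psi'(\tau)\,d\tau$. Substituting both series and interchanging the two summations with the integral reproduces precisely the double sum and the integral $\int_0^t(\psi(t)-\psi(\tau))^{\beta+n\beta-1}(\psi(\tau)-\psi(0))^{\alpha m}\psi'(\tau)\,d\tau$ in \eqref{Eq2}, since $\beta(n+1)-1=\beta+n\beta-1$.

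The hard part will be the analytic bookkeeping rather than any single deep idea: one must justify the term-by-term inversion and the interchange of the two infinite series with the convolution integral. This rests on the geometric-series condition $|k_2/s^\beta|<1$ together with absolute convergence of the resulting double sum, which is guaranteed by the entirety of the Mittag--Leffler functions and by the $\psi$-exponential-order hypotheses required for Lemma~\ref{lemma:01}. A secondary technical point is that the convolution theorem for the generalized transform and the identity $\mathcal{L}_\psi\{(\psi(t)-\psi(0))^{\gamma-1}\}=\Gamma(\gamma)s^{-\gamma}$ are not displayed among the lemmas, but both are simple consequences of the definition of $\mathcal{L}_\psi$ and of \eqref{GL-M2}; once they are recorded, the computation closes.
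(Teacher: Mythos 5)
Your proposal is correct and follows essentially the same route as the paper: apply the generalized Laplace transform to reduce \eqref{PL} to algebraic relations, invert the $A$-equation via \eqref{GL-M1}, and recover $B$ as the Mittag--Leffler term plus a $\psi$-convolution of $(\psi(t)-\psi(0))^{\beta-1}E_{\beta,\beta}(-k_2(\psi(t)-\psi(0))^{\beta})$ with $A$. In fact you are slightly more complete than the paper's own argument, which works with $\alpha$ in place of $\beta$ throughout the $B$-equation and stops at the convolution form without carrying out the series--integral interchange that yields the double sum in \eqref{Eq2}.
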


\begin{proof}
Taking the Laplace transform on the first equation of (\ref{PL}), 
we get from Lemma~\ref{lemma:01} that
\begin{equation*}
\begin{split}
\mathcal{L}_\psi  \{^{C}D^{\alpha,\psi} A(t) \} 
&= \mathcal{L}_\psi  \{-k_1A(t)\},\\
s^\alpha \left[ \mathcal{L}_\psi  \{ A(t) \} 
- \sum_{k=0}^{n-1} s^{-k-1}(f^{[k]})(0)\right] 
&= -k_1 \mathcal{L}_\psi  \{ A(t) \}, \\
(s^\alpha+k_1)\mathcal{L}_\psi  \{ A(t) \} 
&= s^\alpha  \sum_{k=0}^{n-1} s^{-k-1}(f^{[k]})(0),\\
\mathcal{L}_\psi  \{ A(t) \} 
&= A_0 \frac{s^{\alpha-1}}{s^\alpha+k_1},
\end{split}
\end{equation*}
and equation (\ref{Eq1}) follows by taking the inverse Laplace transform.
Using (\ref{Eq1}), the second fractional order equation becomes
\begin{equation*}
^{C}D^{\beta,\psi} B(t) =k_1A_0  E_{\alpha} 
\left( -k_1 (\psi(t)-\psi(0))^\alpha \right)-k_2B(t)
\end{equation*}
and, taking the Laplace transform, we get from Lemma~\ref{lemma:02} that
\begin{equation*}
\begin{split}
\mathcal{L}_\psi  \{^{C}D^{\alpha,\psi} B(t) \} 
&= \mathcal{L}_\psi  \{k_1A_0  E_{\alpha} \left( -k_1 (\psi(t)-\psi(0))^\alpha \right)-k_2B(t)\},\\
s^\alpha \left[ \mathcal{L}_\psi  \{ B(t) \} - \sum_{k=0}^{n-1} s^{-k-1}(f^{[k]})(0)\right] 
&= \mathcal{L}_\psi  \{k_1A_0  E_{\alpha} \left( -k_1 (\psi(t)-\psi(0))^\alpha \right)\}
-k_2 \mathcal{L}_\psi  \{ B(t) \}, \\
(s^\alpha+k_2)\mathcal{L}_\psi  \{ B(t) \} -s^{\alpha-1} B_0 
&= \mathcal{L}_\psi  \{k_1A_0  E_{\alpha} \left( -k_1 (\psi(t)-\psi(0))^\alpha \right)\},  \\
\mathcal{L}_\psi  \{ B(t) \} 
&= B_0 \frac{s^{\alpha-1}}{s^\alpha+k_2}+\frac{1}{s^\alpha+k_2}\mathcal{L}_\psi 
\{k_1A_0  E_{\alpha} \left( -k_1 (\psi(t)-\psi(0))^\alpha \right)\},\\ 
\mathcal{L}_\psi \{ B(t) \} 
&= B_0 \mathcal{L}_\psi \{ E_{\alpha} \left( -k_2 (\psi(t)-\psi(0))^\alpha \right)\} \\
& \quad + \mathcal{L}_\psi  \{ (\psi(t)-\psi(0))^{\alpha-1} 
E_{\alpha,\alpha} \left( -k_2 (\psi(t)-\psi(0))^\alpha \right)\} \\
& \quad \times \mathcal{L}_\psi \{k_1 A_0  E_{\alpha} \left( -k_1 (\psi(t)-\psi(0))^\alpha \right) \}.
\end{split}
\end{equation*}
Therefore,
\begin{multline*}
\mathcal{L}_\psi \{ B(t) \} = B_ 0 \mathcal{L}_\psi \{ E_{\alpha} \left( -k_2 (\psi(t)-\psi(0))^\alpha \right)\\
+(\psi(t)-\psi(0))^{\alpha-1} E_{\alpha,\alpha} \left( -k_2 (\psi(t)-\psi(0))^\alpha \right)*_\psi A(t) \}
\end{multline*}
and, taking the inverse Laplace transform, we get
\begin{equation}
\label{eq:B:alternat}
B(t) = B_ 0  E_{\alpha} \left( -k_2 (\psi(t)-\psi(0))^\alpha \right)
+(\psi(t)-\psi(0))^{\alpha-1} E_{\alpha,\alpha} \left( -k_2 (\psi(t)-\psi(0))^\alpha \right)*_\psi A(t), 
\end{equation}
which proves the intended expression (\ref{Eq2}).
\end{proof}

\begin{remark}
\label{rm:ref01}
Note that the proof of Theorem~\ref{thm:new} 
shows that the double series appearing in equation \eqref{Eq2} 
can be expressed in terms of the Mittag--Leffler function for two-parameters:
see \eqref{eq:B:alternat}.
\end{remark}


\subsection{Application}
\label{sec:04}

Now an application is provided to support our theoretical model \eqref{PL}. 
For that we use Blood Alcohol Levels (BAL) data of a real individual, 
using our fractional model and showing the important 
role of fractional differentiation with respect to another function $\psi$. 

Given real BAL data along time,
consisting of $r$ points, $(t_0, B_0 ), \ldots, (t_r, B_r)$, 
we approximate these values by the solution $t \mapsto B(t)$
of our theoretical model. The form $B$ is known, 
being given by \eqref{Eq2}, but it depends on $\psi$, and $\alpha$ and $\beta$. 
For each approximation $B(t_i)$ of $B_i$ given by the model, the error 
is defined as the difference between the exact and the approximated
values, that is, by $d_i :=  B_i-B(t_i)$, 
$i=1,\ldots,r$, while the total square error is given by
\begin{equation}
\label{av:total:re}
Error=\sum_{i=1}^{r} (d_i)^2.
\end{equation}


\subsubsection{The classical integer order model}
\label{sec:3.2.1}

In the particular case when we chose in our model \eqref{PL}
$\psi(t) = t$ and $\alpha = \beta = 1$, we obtain the classical 
system of ordinary differential equations that model 
the blood alcohol level \cite{Qureshi_et_al,Ludwin,Almeida_et_al}:
\begin{eqnarray}
\label{PL-int}
\begin{cases}
 \frac{dA(t)}{dt} =-k_1A(t), \quad A(0)=A_0,\\ \\
\frac{dB(t)}{dt} B(t) =k_1A(t)-k_2B(t), \quad B(0)=0.
\end{cases} 
\end{eqnarray}
The solution of problem (\ref{PL-int}) is given
as a direct corollary of our Theorem~\ref{thm:new} as
\begin{equation}
\label{Eq1:b}
A(t)=A_0 e^{-k_1 t},
\end{equation}
and
\begin{equation}
\label{Eq02:b}
B(t)=A_0 \frac{k_1}{k_2-k_1}(e^{-k_1 t}-e^{-k_2 t}).
\end{equation}

In \cite{Ludwin}, Ludwin tried to fit the experimental data given 
in Table~\ref{tab1} using the classical model \eqref{PL-int}. 
\begin{table}[ht] 
\caption{Experimental data for the Blood Alcohol Level (BAL) of a real individual \cite{Ludwin}.} 
\label{tab1}
\begin{center}
\begin{tabular}{|l|l|l|l|l|l|l|l|l|l|l|l|} \hline 
Time (min)& 0 & 10 & 20 & 30 & 45 & 80 & 90 & 110 & 170 \\ \hline
BAL (mg/L) & 0 & 150 & 200 & 160 & 130 & 70 & 60 & 40 & 20 \\ \hline 
\end{tabular}
\end{center}
\end{table}

Using an Excel solver, Ludwin found the values
\begin{equation}
\label{eq:paramter:values:CM}
A_0\approx 245.8769, \ \ \  k_1\approx 0.109456, \ \ \  k_2\approx 0.017727, 
\end{equation}
for the expression \eqref{Eq02:b} of $B(t)$. 
The modeling results from equation (\ref{Eq02:b}) 
with the parameter values \eqref{eq:paramter:values:CM}
are given in Table~\ref{tab2}. 
\begin{table}[ht] 
\caption{Blood alcohol level (BAL) predicted by the classical theoretical 
model \eqref{PL-int} with the parameter values \eqref{eq:paramter:values:CM},
corresponding to an error \eqref{av:total:re} of $775$ $(mg/L)^2$.} \label{tab2}
\begin{center}
\begin{tabular}{|l|l|l|l|l|l|l|l|l|l|l|l|} \hline 
Time (min) & 0 & 10 & 20 & 30 & 45 & 80 & 90 & 110 & 170 \\ \hline
BAL (mg/L) & 0.00 & 147.54 & 172.95 & 161.38 & 129.99 & 70.99 & 59.49 & 41.74 & 14.41 \\ \hline
\end{tabular}
\end{center}
\end{table}
The error \eqref{av:total:re} between the real data of Table~\ref{tab1}
and the values of Table~\ref{tab2} obtained by the classical model \eqref{PL-int} 
is $775$ $(mg/L)^2$. However, as shown in \cite{Qureshi_et_al}, 
these results can be improved by choosing
\begin{equation}
\label{eq:paramter:values:CM:b}
A_0 = 261.721,
\quad k_1 = 0.111946,
\quad k_2 = 0.0186294,
\end{equation}
for which the model \eqref{PL-int} gives the values of Table~\ref{tab2:b},
decreasing the error from $775$ $(mg/L)^2$ to
\begin{equation}
\label{eq:best:classical:error}
E_{classical} \approx 496 \ (mg/L)^2.   
\end{equation}
\begin{table}[ht] 
\caption{Blood alcohol level (BAL) predicted by the classical theoretical 
model \eqref{PL-int} with the parameter values \eqref{eq:paramter:values:CM:b},
corresponding to an error \eqref{av:total:re} of $496$ $(mg/L)^2$.} \label{tab2:b}
\begin{center}
\begin{tabular}{|l|l|l|l|l|l|l|l|l|l|l|l|} \hline 
Time (min) & 0 & 10 & 20 & 30 & 45 & 80 & 90 & 110 & 170 \\ \hline
BAL (mg/L) & 0.00 & 158.11 & 182.85 
& 168.62 & 133.73 & 70.69 & 58.69 & 40.44 & 13.22 \\ \hline
\end{tabular}
\end{center}
\end{table}

Such result can be improved using our fractional model \eqref{PL}.
Indeed, by other choices of function $\psi$ and $\alpha$ and $\beta$ 
in \eqref{PL}, the solution of our fractional model \eqref{PL} 
can be closer to the real data of Table~\ref{tab1}
than the solution obtained by the classical model \eqref{PL-int}.
To measure that, we follow \cite{Rosales_et_al} and define the 
gain $\mathcal{G}$ of the efficiency of our model,
comparing the error \eqref{eq:best:classical:error} of the classical model, 
$E_{classical}$, with the error \eqref{av:total:re} associated 
to a particular fractional instance of our model \eqref{PL}:
\begin{equation}
\label{eq:gain}
\mathcal{G} 
= \left\vert \frac{E_{classical}-E_{fractional}}{E_{classical}}\right\vert
\approx \left\vert \frac{496-E_{fractional}}{496}\right\vert.
\end{equation}
In percentage, we multiply the value \eqref{eq:gain} by 100.


\subsubsection{The Caputo fractional order model}
\label{sec:3.2.2}

In the particular case when we chose in our model \eqref{PL}
$\psi(t) = t$ with $\alpha, \beta \in (0,1)$, we obtain the  
standard Caputo system of fractional differential equations 
studied in \cite{Ricardo_et_al}: 
\begin{equation}
\label{PL-fra}
\begin{cases}
^{C}D^{\alpha} A(t) =-k_1A(t), \quad A(0)=A_0,\\
^{C}D^{\beta} B(t) =k_1A(t)-k_2B(t), \quad B(0)=0.
\end{cases} 
\end{equation}
The solution of (\ref{PL-fra}) is also a direct
consequence of our Theorem~\ref{thm:new}, which gives
\begin{equation*}
A(t)=A_0 E_{\alpha}(-k_1 t^\alpha),
\end{equation*}
and
\begin{equation*}
B(t)=A_0 \sum_{n=0}^{\infty}\sum_{m=0}^{\infty}
\frac{(-k_2)^n(-k_1)^m}{\Gamma(n \beta 
+ \beta +m \alpha +1)}t^{n \beta +\beta +m \alpha}.
\end{equation*}
Using the real experimental data of blood alcohol level of Table~\ref{tab1}, 
the authors of \cite{Ricardo_et_al} used a numerical optimization approach based 
on the least squares approximation to determine the orders $\alpha$ and $\beta$ 
of the fractional Caputo operator that better describes the real data. 
They proved that the Caputo fractional model \eqref{PL-fra} fits better the available data
when compared with the classical one given by \eqref{PL-int}. Moreover, in 2019, 
Qureshi et al. \cite{Qureshi_et_al} considered not only the Caputo
fractional operator, which has a singular kernel, but also non-singular kernels: 
they investigated the use of the Atangana--Baleanu--Caputo (ABC) 
and the Caputo--Fabrizio (CF) kernels to fractionalize the classical model. 
It has been shown in \cite{Qureshi_et_al} that the fractional versions based 
on ABC and CF operators are not able to improve the accuracy of the results 
obtained by the Caputo model \eqref{PL-fra}. The current state of the art 
is thus given by the fractional model \eqref{PL-fra} with the parameters 
\begin{equation}
\label{eq:paramter:values:CM:fm}
A_0\approx 991.085, \ \ \  
k_1\approx 0.0287362, \ \ \  
k_2\approx 0.0843802, \ \ \ 
\alpha \approx 0.881521, 
\ \ \ \beta =1,
\end{equation}
found via the least squares error minimization technique \cite{Qureshi_et_al},
for which the Caputo model \eqref{PL-fra} gives the values of Table~\ref{tab2:frac}.

\begin{table}[ht] 
\caption{Blood alcohol level (BAL) predicted by the Caputo theoretical model \eqref{PL-fra} 
with the parameter values \eqref{eq:paramter:values:CM:fm},
corresponding to an error \eqref{av:total:re} of $417$ $(mg/L)^2$
and a gain \eqref{eq:gain} of $16\%$.} \label{tab2:frac}
\begin{center}
\begin{tabular}{|l|l|l|l|l|l|l|l|l|l|l|l|} \hline 
Time (min) & 0 & 10 & 20 & 30 & 45 & 80 & 90 & 110 & 170 \\ \hline
BAL (mg/L) &0.000 & 157.733 & 184.469 & 169.800 & 131.700 & 
67.879 & 57.162 & 41.883 & 20.730 \\ \hline
\end{tabular}
\end{center}
\end{table}

The values of Table~\ref{tab2:frac} lead to an error of $417$ $(mg/L)^2$,
which represents a gain of $16\%$ with respect to the best
fitting of the classical model. As we shall see, we can however
improve this state of the art by using our general 
$\psi$-Caputo model \eqref{PL}
with an appropriate function $\psi$ different from the identity.


\subsubsection{The $\psi$-Caputo fractional order model}
\label{sec:3.2.3}

We now consider an application of our fractional differential system
with $\psi$-Caputo fractional derivatives to the blood alcohol concentration 
involving the two linked absorption processes $A(t)$ and $B(t)$, 
first in the stomach and then in the blood. Precisely, we provide a 
different function $\psi(t)$ for which the solutions of the fractional 
model (\ref{PL}) models better the given real data of Table~\ref{tab1}
when compared with the ones studied in the literature.

Recall that for any choice of $\psi(t)$ one can always
reduce our $\psi$-fractional system to a classical Caputo system:
according with Remark~\ref{rm9}, $\psi$-Caputo fractional problems 
are just Caputo fractional problems. Here we show that one does not
need to use nontrivial functions $\psi$ to improve the state of the art.
Indeed, in comparison to existing classical and fractional models found in
the literature, we outperform them significantly by employing a
simple yet non-standard kernel function $\psi(t)$, reducing the error 
by more than half, resulting in an impressive gain improvement of 59\%.

Let $\psi(t)=a_1+a_2 t$, $a_1$, $a_2$ $\in \mathbb{R}$. 
It follows from \eqref{Eq2} that
\begin{equation}
\label{Eq2:b}
\begin{split}
B(t)&= k_1 A_0 \sum_{n=0}^{\infty}\sum_{m=0}^{\infty}
\frac{(-k_2)^n(-k_1)^m}{\Gamma((n+1)\beta+\alpha m+1)} 
(a_2 t)^{\alpha m+\beta(n+1)}
\end{split}
\end{equation}
or, equivalently, in terms of a naturally emerging 
bivariate Mittag--Leffler function,
$$
B(t) = k_1 A_0 (a_2 t)^{\beta} E_{\alpha,\beta,\beta+1} 
\left( -k_1(a_2t)^{\alpha}, -k_2(a_2t)^{\beta} \right),
$$
where this $E$ is the naturally emerging bivariate Mittag--Leffler 
function introduced in 2020 \cite{MR4117524}.

To obtain the best possible values for the parameters
$a_1$ and $a_2$ that define $\psi(t)$ and the best
values of $\alpha$ and $\beta$, we have used
the free and open source GNU Octave 
high-level programming language
and the \texttt{lsqcurvefit} routine of the optimization
package, which solves nonlinear data fitting problems 
in the least squares sense. Precisely, we developed
the GNU Octave code of Listing~\ref{Code:01}.
\begin{lstlisting}[escapeinside={(*}{*)}, 
caption=GNU Octave code used for Section~\ref{sec:3.2.3}.]
clear, clc; (*\label{Code:01}*)
pkg load optim
t  = [0 10 20 30 45 80 90 110 170];
BAL = [0 150 200 160 130 70 60 40 20];     
t0 = t(1);
p0= [991.085 0.0287362 0.0843802 0.881521 1 1];
Error = @(R,M) sum((R-M).^2); 
% psi-Fractional case  with psi(t)=a1+a2.*t
B = @(k,t) arrayfun(@(t) Spsi(t0,t,k),t);
B1= B(p0,t)
printf('Total cpu time: %f seconds\n', cputime-cput);
Error(BAL,B1);
cput = cputime;
p = lsqcurvefit(B,p0,t,BAL);  
%printf('Total cpu time: %f seconds\n', cputime-cput);
Bp01 = B(p,t);
% Best values from paper of S. Qureshi et al. 2019 
BAL01= [0 157.733 184.469 169.80 131.70 67.879 57.162 41.883 20.730];    
figure
plot(t,Bp01,'r')
ylabel('Blood alcohol level (mg/l)')
xlabel('Time (minutes)')
hold on
plot(t,BAL01,'--g')
hold on
plot(t,BAL,"o")
legend({'Fractional psi(x)=a1+a2t','Fractional psi(x)=x','real data'})
hold off
\end{lstlisting}

Using our Octave code of Listing~\ref{Code:01},
we obtained the parameter values given in Table~\ref{tab4}.
\begin{table}[ht]
\caption{Optimal parameter values for the $\psi$-Caputo model \eqref{PL}
with $\psi(t)=a_1+a_2 t$.}\label{tab4}	
\begin{center}
\begin{tabular}{|c|c|c|c|c|c|c|} \hline
$A_0$ & $k_1$ & $k_2$ & $\alpha$ & $\beta$ & $a_1$ & $a_2$ \\ \hline
1270.679 & 0.0217903 & 0.1330596 & 1.012392 & 1.288845 
& $\forall a_1 \in \mathbb{R}$ & 0.621767 \\ \hline
\end{tabular}
\end{center}
\end{table}

The Blood alcohol level \eqref{Eq2} associated with the values 
of Table~\ref{tab4} are given in Table~\ref{tab3},
\begin{table}[ht] 
\caption{Blood alcohol level (BAL) predicted by the new $\psi$-Caputo 
model \eqref{PL} with the parameter values of Table~\ref{tab4},
corresponding to an error \eqref{av:total:re} of less than $202$ $(mg/L)^2$
and a gain \eqref{eq:gain} of more than $59\%$.} \label{tab3}
\begin{center}
\begin{tabular}{|l|l|l|l|l|l|l|l|l|l|l|l|} \hline 
Time (min) & 0 & 10 & 20 & 30 & 45 & 80 & 90 & 110 & 170 \\ \hline
BAL (mg/L) & 0.000 & 152.569& 193.826&  169.420& 123.541 
& 70.156 & 60.340 & 44.502 &  17.518 \\ \hline
\end{tabular}
\end{center}
\end{table}
for which we decrease the error of $417$ $(mg/L)^2$ for the best model
in the literature to a total error of less than $202$ $(mg/L)^2$.
This corresponds to a gain of more than $59\%$ with respect
to the classical model. 

In Figure~\ref{fig2}, we plot the real data
of Table~\ref{tab1} with the curves obtained
with $\psi(t) = t$ (Section~\ref{sec:3.2.2}) 
and $\psi(t)=a_1+a_2 t$ (Section~\ref{sec:3.2.3}).
\begin{figure}
\centering
\includegraphics[scale=0.6]{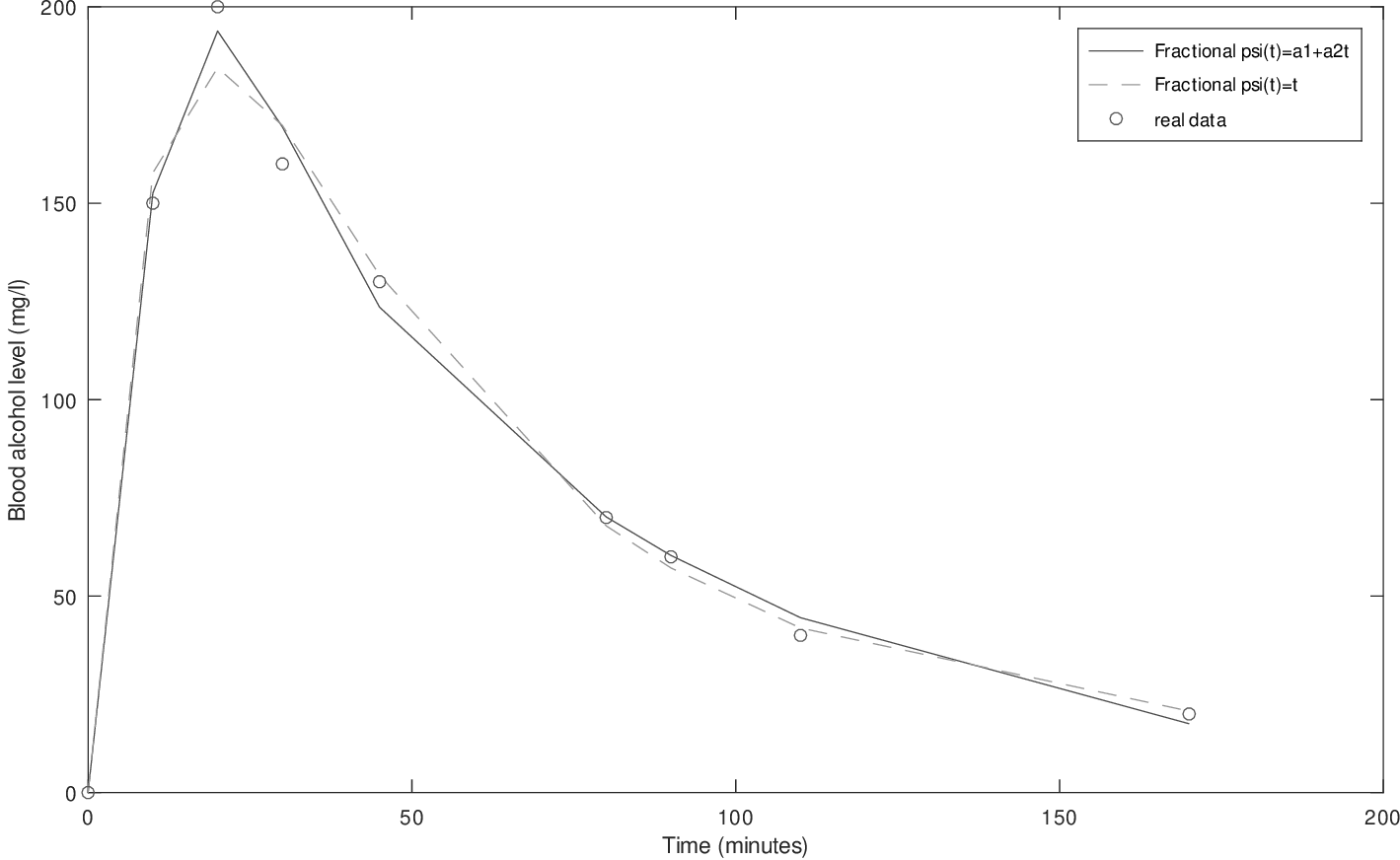}
\caption{Blood alcohol level comparison between the real data 
of Table~\ref{tab1} and the predictions obtained from
the best fractional models \eqref{PL} with $\psi(t) = t$ (Caputo)
and $\psi(t)=0.621767 t$.}
\label{fig2}
\end{figure}

If ones only uses $\psi(t) = a_1 + a_2 t$, then it is not 
really $\psi$-fractional calculus, but only a constant multiple 
of classical fractional calculus. Indeed, putting $\psi(t) = a_1 + a_2 t$ 
in equation \eqref{IF-RL}, it is clear that the $\psi$-Riemann--Liouville 
fractional integral to order alpha is simply $a_2^\alpha$ times 
the Riemann--Liouville fractional integral to order $\alpha$. 
Similarly, from equations \eqref{FD-RL}--\eqref{FD-C}, it is clear that the 
$\psi$-fractional derivative to order $\alpha$ (Riemann--Liouville or Caputo) 
is simply $a_2^\alpha$ times the fractional derivative to order $\alpha$ 
(Riemann--Liouville or Caputo) when $\psi(t) = a_1 + a_2 t$. 
Therefore, in this case, all of our $\psi$-Caputo fractional models 
are just Caputo fractional models -- constant multiples do not change 
the shape of the problem. To demonstrate the usefulness of $\psi$-Caputo fractional calculus, 
we end with an example where we use an actual $\psi$-Caputo model, 
comparing it with the models available in the literature. 
As one can see from Table~\ref{tab7}, the choice $\psi(t)=(t+0.5)^{0.97}$
is enough to improve the results published in the literature.

\begin{table}[ht] 
\caption{Comparison between integer-order, Caputo and $\psi$-Caputo models.}
\label{tab7}
\begin{center}
\begin{tabular}{|c|c|c|c|c|c|c|} \hline 
$\psi(t)$ & $\alpha$ & $\beta$ & $A_0$ & $k_1$ & $k_2$ & Error \\ \hline
integer order &1 & 1 & 245.8769 & 0.109456 & 0.017727& 775.2226 \\ \hline
$\psi(t)=t$ (Caputo) &0.979 & 0.979 & 991.085 & $0.030960$ &$0.08887702$ & 1383,82052 \\ \hline
$\psi(t)=(t+0.5)^{0.97}$ &0.972 & 0.972 & $850.085$ & $0.030960$ &$0.08887702$ & 756.9497 \\ \hline
\end{tabular}
\end{center}
\end{table}


\section{Conclusion} 
\label{sec:05}

We have introduced a novel blood alcohol concentration model 
that captures the dynamics using a fractional differential 
equation featuring the $\psi$-Caputo fractional derivative. 
The utilization of the $\psi$-Caputo operator ensures an optimal 
curve fitting by allowing for the selection of a specific kernel 
$\psi$ based on the particular data being studied. By considering 
$\psi$ as a first-degree polynomial, our results demonstrate 
significant improvement compared to existing literature.
Specifically, the total square error, as shown in 
equation \eqref{av:total:re}, is reduced from $496$ $(mg/L)^2$ 
using the classical model with ordinary differential equations 
to $417$ $(mg/L)^2$ with the Caputo fractional model. However, 
with our $\psi$-Caputo model, employing $\psi(t)=0.621767 t$, 
we achieve a remarkable reduction in the total error to just 
$202$ $(mg/L)^2$, resulting in a substantial gain of $59\%$.

In summary, the key points of advantages 
of our research are:
\begin{itemize}
\item We provide a novel dynamical model for blood alcohol concentration; 

\item We successfully derive an analytic solution for both the alcohol 
concentration in the stomach and the alcohol concentration in the blood of an individual;

\item We prove analytical formulas that provide a straightforward numerical scheme;

\item We improved the state of the art with a better fit to real experimental 
data on blood alcohol levels;

\item Our model outperforms available ones significantly: 
we are able to reduce the error by more than half, resulting 
in an impressive gain improvement of 59 percent.
\end{itemize}

Given the good results obtained, for future work we plan 
to investigate the usefulness of generalized $\psi$-Caputo operators 
in other contexts, e.g. with respect to the respiratory syncytial 
virus infection \cite{MyID:419,MyID:534}.


\section*{Acknowledgements}

The authors are supported by the Portuguese Foundation for Science and Technology (FCT)
through the Center for Research and Development in Mathematics and Applications (CIDMA),
grants UIDB/04106/2020 and UIDP/04106/2020, and within the project ``Mathematical Modelling 
of Multi-scale Control Systems: Applications to Human Diseases'' (CoSysM3), 
reference 2022.03091.PTDC, financially supported by national funds (OE) 
through FCT/MCTES. They are grateful to three anonymous referees 
for several suggestions and comments that helped them to improve the paper.


\section*{Conflict of Interest Statement}

The authors have no conflicts of interest to disclose.


\section*{Data Availability Statement}

Data sharing is not applicable to this article as no new data were created.



\end{document}